\def\be{\begin{equation}}
\def\ee{\end{equation}}
\begin{document}

\title{Quantum Telegraph Behavior Without Photons}

\author{Truong-Son P. Văn}
\email{truongson.vanp@gmail.com}
\affiliation{ Ho Chi Minh City, Vietnam}

\author{Daniel Maienshein}
%\email{maienshein1@gmail.com}
\affiliation{Department of Mathematics, University of Pittsburgh, Pittsburgh, PA, 15260, USA}

\author{David W. Snoke}
%\email{snoke@pitt.edu}
\affiliation{Department of Physics and Astronomy, University of Pittsburgh, and Pittsburgh Quantum Institute, Pittsburgh, PA, 15260, USA}

\begin{abstract}
	We show that a simple model of non-Hermitian noise gives rise to the telegraph switching behavior seen in experiments with single qubits, without any reference to the existence of photons as corpuscles. This lends support to a continuous collapse interpretation of quantum mechanics, but can also be viewed as a model of continuous detection of a steady-state process in the incoherent limit. We show explicitly that such a system obeys the Born rule for particle counting statistics, even though no particle behavior has been invoked at any point in the calculation.
\end{abstract}

\maketitle

\section{Introduction}

It has been argued that in quantum mechanics one can remove the idea of photons as individual corpuscles, and treat photons simply as resonances of the underlying quantum fields. The derivation of the properties of photons (and other particles) as field resonances is well known, and supports this viewpoint (for a review, see Ref. \onlinecite{snokeQM}), but some experiments seem to naturally lend themselves to an interpretation in terms of individual corpuscles; namely, experiments with ``clicks'' or ``counts.'' One such experiment is when a wave with amplitude of $N=1$ is sent to multiple photon detectors, and only one of them clicks, even if they are spacelike separated. Another is when a continuous electromagnetic field impinges on a single atom or two-level system, and that system responds with ``telegraph noise;'' that is, the state of the system jumps suddenly and randomly between two states, in what can be seen as discrete photon absorption and emission events. Telegraph noise has been  observed in trapped ions~\cite{SauterBlattNeuhauserToschekQuantumJumps1988, BergquistHuletItanoWinelandObservationQuantum1986, NagourneySandbergDehmeltShelvedOptical1986}, cavity fields~\cite{GleyzesKuhrGuerlinBernuDelegliseBuskHoffBruneRaimondHarocheQuantumJumps2007}, molecules~\cite{BascheKummerBrauchleDirectSpectroscopic1995},
and superconducting artificial atoms~\cite{VijaySlichterSiddiqiObservationQuantum2011}, just to list a few. 

It is common to interpret the stochastic nature of clicks and counts in quantum mechanics as arising from measurement of single photons. In the standard model of measurement, the clicks or counts seen in such experiments are the result of an intrinsically random, non-unitary process, which cannot be accounted for by the Schr\"odinger equation or by any equations derived from it, such as Lindbladian dynamics. For example, in the Copenhagen interpretation, such a non-unitary process is instigated by human knowledge. However, two recent developments in the theory of measurement allow us to take a different perspective. The first is the large body of work on ``weak'' measurement (for a review, see, e.g., Ref. \onlinecite{jordan}), which shows that what we encounter as all-or-nothing measurements can be seen as the limiting behavior of a large number of weak, partial measurements. As shown in prior work \cite{spont3,VanJordanSnokeMeasurementTime2025}, the effect of weak measurements can be represented by a differential operator  that can be added to the Hamiltonian of the Schr\"odinger equation. 

A second body of work that has arisen in recent decades is spontaneous collapse theory, which has had several variants such as continuous spontaneous localization (CSL) \cite{diosi1,diosi2,gisin3,bassi1,bassi, bassi2}, the Ghirardi-Rimini-Weber (GRW) model \cite{grw,tumulka,heat3}, and nonlocal stochastic collapse \cite{gisin,gisin2}.
A recent variation \cite{spont,spont2,spont3,VanJordanSnokeMeasurementTime2025} takes a similar approach, but argues that no net violation of energy conservation is needed, if a simple non-Hermitian term is added to the Schr\"odinger equation. It has been shown \cite{spont3,VanJordanSnokeMeasurementTime2025} that the non-Hermitian different operator added to the Schr\"odinger equation in this approach is identical to that derived from weak measurement theory. Therefore, one can take all the mathematical results of weak measurement theory, but dismiss the measurement apparatus altogether, and simply posit this differential operator as a universal stochastic noise term, with a noise level proportional to actual physical random fluctuations in the system. 

It has already been shown \cite{VanJordanSnokeMeasurementTime2025} that the clicking of one detector out of the midst of many can be interpreted as follows: an input optical plane-wave puts multiple atoms into a superposition of both the upper and lower states of a two-level system, and then a stochastic process causes one of the atoms to end up in its upper state, while jamming all of the others into the lower state; this can happen suddenly, with a measurement time that is not strongly affected by the number of atoms used as detectors. In other words, the random ``click'' at one atom comes ultimately not from a pre-existing photon corpuscle, but from a random walk due to quantum noise acting nonlocally on many atoms with discrete excitation states; it is not necessary at all to have a photon as a corpuscle to get just one of the detectors to click.

In this paper we show that the same type of analysis can be applied to the second type of experiment mentioned above, namely telegraph noise.  A single two-level system (qubit) is subject to a continuous, steady-state, incoherent pump, but shows stochastic jumps between its upper and lower state, with statistics that track with the expected probability according to the Born measurement rule. No photons as corpuscles are invoked to obtain this behavior.

Prior work on telegraphing behavior in the context of weak measurement theory \cite{jordan, WisemanMilburnQuantumMeasurement2010, korotkov1999continuous, BauerBernardTilloyComputingRates2015}  
has assumed the interpretation that the ``measurements'' are, at the end of the day, gaining human knowledge about what photon corpuscles are doing. However,  the entire effect of the ``measurements'' is to introduce stochastic noise, which, as we show here, reproduces the  quantum Born rule for particle counting, namely, that the probability of a count is proportional to the square of the amplitude of the wave function at the detector.  In any two-state system, low-level continuous pumping with energy loss and noise will result in telegraph noise, jumping between its two states, which obeys the Born rule. Therefore it is appealing to simply posit the quantum noise term as fundamental.

As discussed in Section~\ref{sect.classsical}, the quantum mechanical results obtained in this case are similar to classical telegraphing systems, but have important differences.

\section{The two-level model with continuous pumping and decay}

The standard Bloch equations for a qubit are \cite{snokeQM}
\begin{eqnarray}
	\frac{\partial U_1}{\partial t} &=&   - \frac{U_1}{T_2} +\tilde\omega U_2  \nonumber \\
	\nonumber \\
	\frac{\partial U_2}{\partial t} &=&   - \frac{U_2}{T_2} -\tilde\omega U_1 - \omega_R U_3'
	\nonumber \\
	\nonumber \\
	\frac{\partial U_3}{\partial t} &=&  - \frac{U_3+1}{T_1} +\omega_R U_2,
    \label{Blochstandard}
\end{eqnarray}
where $\vec{U}$ is the standard Bloch vector in the rotating frame, $T_2$ is a decoherence time and $T_1$ is an intrinsic energy-loss time, $\tilde\omega$ is the detuning of the pump frequency from the qubit resonance frequency, and $\omega_R$ is the Rabi oscillation frequency, proportional to the intensity of the pump wave. 

In the case of an incoherent pump, a term can be added to the last equation (see Appendix~\ref{blochder}) that acts oppositely to a $T_1$ loss process, namely $G(1-U_3)$, where $G$ is the incoherent pump rate, which is either zero or positive, and the $(1-U_3)$ factor accounts for Pauli exclusion, preventing the upper state from having occupation greater than 1. 
If we assume fully incoherent pumping, this gives us
\begin{eqnarray}
	\frac{\partial U_3}{\partial t} &=&  - \frac{U_3+1}{T} +G(1-U_3),
	\label{UBloch}
\end{eqnarray}
where we write $T_1 =T$ as the characteristic decay time.
In steady state, this has the solution
\begin{equation}
	U_3 = \frac{GT-1}{GT+1}.
\end{equation}
Since $U_3 = \langle N_e\rangle -  \langle N_g\rangle$, where $N_e$ is the  number in the excited state and $N_g$ is the number in the ground state, this can be interpreted as giving the probability of the qubit being in one of the two states; since $N_e + N_g =1$, we can also write $U_3 = 2N_e-1$. If we are looking at a single qubit over a long time, we can take these occupation numbers as a time average over jumps between the states. We expect that the ratio of time spent in the upper versus lower state will then be
\begin{eqnarray}
	\frac{\langle t_e\rangle}{\langle t_g\rangle} = \frac{\langle N_e\rangle}{\langle N_g\rangle} = \frac{\langle N_e\rangle}{1-\langle N_e\rangle}& =& \frac{(U_3+1)/2}{1-(U_3+1)/2} \nonumber\\
	&=& \frac{1+U_3}{1-U_3} \nonumber\\
	&=& \frac{1+ \displaystyle \frac{GT-1}{GT+1}}{1-\displaystyle \frac{ GT-1}{GT+1}} = GT.
	\label{avgstat}
\end{eqnarray}
In Appendix~\ref{sect.stat} we justify this prediction by
analyzing the behavior of $U_3$.

Therefore we expect in a telegraph scenario,  the fraction of time spent in the upper state will be proportional to the pumping rate.  However, nothing in the Bloch equations will give stochastic jumping. To obtain this, we add in the same term used in prior work,
\begin{eqnarray}
	\frac{\partial U_3^{\rm stoch}}{\partial t} &=& \epsilon (1-U_3^2),
	\label{Ustoch}
\end{eqnarray}
where $\epsilon$ is a small, fluctuating term that gives, if acting alone, a Martingale random walk of $U_3$. As discussed in the Introduction, this term can either be justified as a fundamental postulate of spontaneous collapse theory, or as the differential limiting behavior of continuous weak measurement.
The random walk generated by this term has been shown \cite{spont2} to give trapping (``collapse'') at the values $U_3 = \pm 1$ for states that begin in a superposition between these two values, with probability in agreement with the standard Born rule for quantum measurements.

If this stochastic term acts alone on a qubit, the random walk of the Bloch vector will end up at the top or the bottom of the sphere and stay there. However, if we add together the two terms (\ref{UBloch}) and (\ref{Ustoch}), we will have dynamical instability. It can easily be seen that the term (\ref{Ustoch}) never leads the Bloch vector to {\em exactly} $U_3 =1$ or $U_3 = -1$ (as in the quantum Zeno effect \cite{jordan, korotkov1999continuous,WisemanMilburnQuantumMeasurement2010}; the step size of the random walk decreases as $U_3$ approaches one of these end points, so that it becomes quite close, and trapped, so to speak, but never quite at the end. In this context, the ``force'' (\ref{UBloch}) has the effect of pushing the Bloch vector away from the ends: at $U_3 = 1$,
${\partial U_3}/{\partial t} =  - 2/T$, and at $U_3 = -1$,
${\partial U_3}/{\partial t} =  2G$. Therefore there is a nonzero chance that the Bloch vector will escape an end point, and undergo a new random walk in the middle of the sphere. As shown in Appendix~\ref{sect.stat}, when this occurs, the statistical prediction (\ref{avgstat}) can justified mathematically.

This model is essentially the same as that considered in the context of weak measurement theory with coherent-wave pumping (see, e.g., Ref.~\cite{jordan}, Chapters 5 and 6, and references therein), but with an incoherent pump. By casting this model only in terms of the non-unitary term (\ref{Ustoch}), we show that the full apparatus of weak measurement theory is not neeeded, however; simply positing a universal noise term of the form used here converges to the predictions of weak measurement theory in all known cases. That is, the term (\ref{Ustoch}) is by itself {\em sufficient} for particle-counting clicks that follow the Born rule of counting statistics.

\section{Numerical results}

As in previous work~\cite{VanJordanSnokeMeasurementTime2025},
we replace $\eps$ by $\alpha X_n/(\Delta t)^{1/2}$, where $\set{X_n}_{n\in \N}$ are
independent identically distributed (iid) random variables and $\alpha$ gives the ratio of the time scale of the incoherent pumping (which nominally requires a time $T/(GT+1)$ to traverse the Bloch sphere), and the time scale for the random walk to traverse the Bloch sphere. In the limit of $\Delta t \to 0$, this
satisfies
\begin{equation}
	\label{eq:SDE}
	d U_{3} = \left(-\frac{U_3 + 1}{T} + G(1 - U_3) \right)  dt + \alpha (1 - U_{3}^2) dW_t \,,
\end{equation}

\noindent where $W_t$ is one-dimensional Brownian motion.
We numerically simulated this equation via Euler-Maruyama~\cite{Higham2021} method with initial value $U_3(0) = 1$ using the Julia Language~\cite{Julia-2017}.

\begin{figure}[h!]
	\begin{center}
		\includegraphics[width=0.7\textwidth]{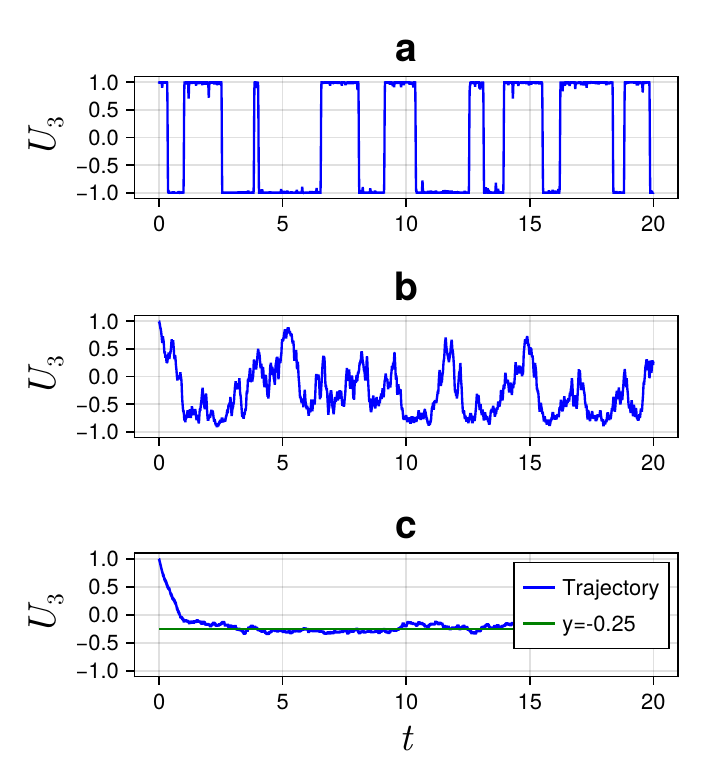}
	\end{center}
	\caption{a) A typical ``telegraph'' trajectory of~\eqref{eq:SDE}, for weak continuous driving and decay. Parameter values were $dt = 10^{-4}$,  $\alpha = 10$, $T = 1$, and $G = 0.6$, corresponding to $(GT-1)/(GT+1) = -0.25$.
		b) A typical noisy trajectory of~\eqref{eq:SDE} when the random fluctuation rate is high, for the same parameters as (1a) except $\alpha = 0.5$.
		c) A typical trajectory of~\eqref{eq:SDE} when the random fluctuation rate is low, for the same parameters as (1a) except $\alpha = 0.05$.
		In all cases the initial condition was $U_3(0) = 1$, and the data were smoothed by an instrumental resolution function with temporal width of $5\times 10^{-3}$.}
	\label{fig:one-path-unfiltered}
\end{figure}

The competing driving plus stochastic terms lead, as expected, to telegraph noise when simulated numerically.
Figure~\ref{fig:one-path-unfiltered}(a) shows a typical numerical simulation of $U_3$ as a function of time.
As seen in this figure, the system remains for the great majority of time in one state or the other, and only briefly spends time in a superposition of the two. Figure~\ref{fig:ratio} shows the ratio of time spent in the upper state to the time spent in the lower state, as the term $GT$ is varied. As seen in this figure, the prediction of (\ref{avgstat}) is confirmed to high accuracy.

\begin{figure}[h!]
	\begin{center}
		\includegraphics[width=0.6\textwidth]{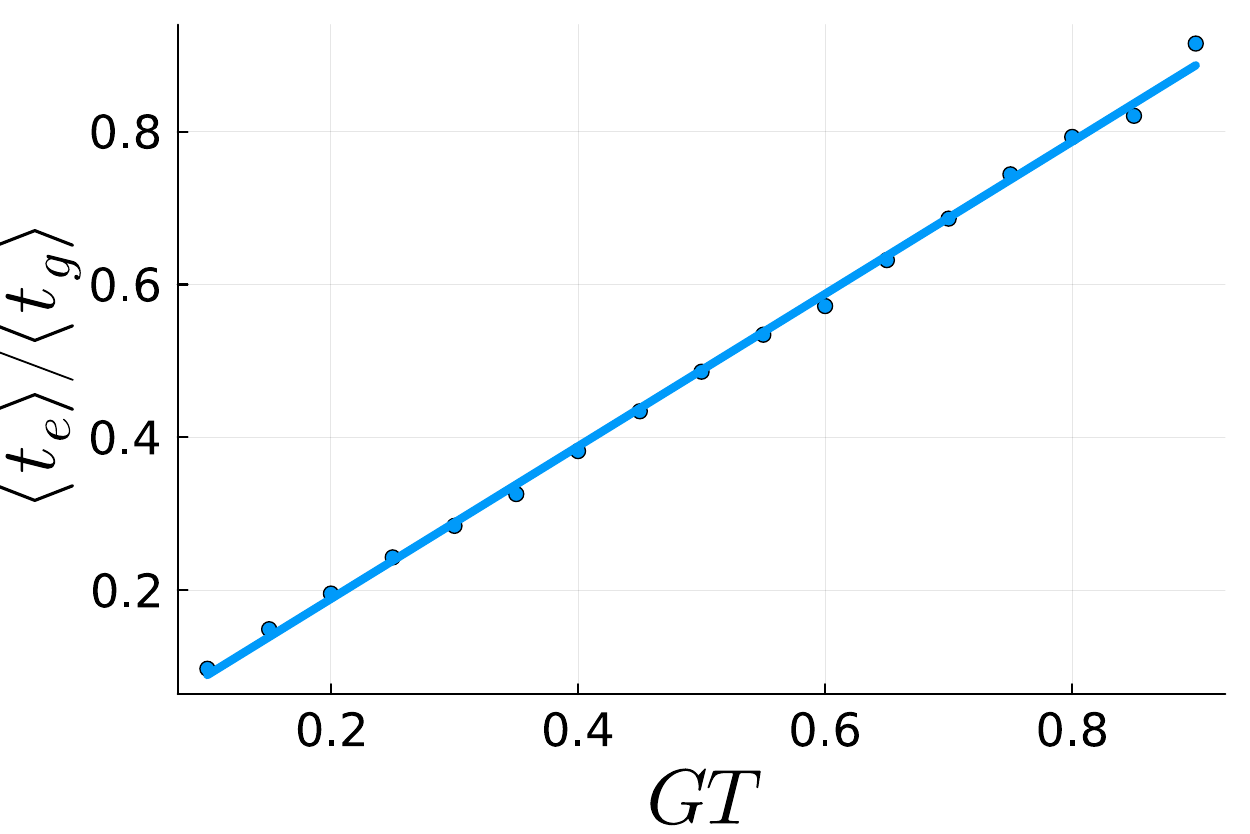}
	\end{center}
	\caption{Ratio of time spent in the upper versus lower states as $GT$ varies, averaged over $2\times 10^8$ time steps of each simulation. Presence in one of the two states was defined as $U_3$ within 0.05 of either $+1$ or $-1$. Parameter values were $dt = 10^{-4}$, $\alpha = 10$, and $T =1$. }\label{fig:ratio}
\end{figure}

\begin{figure}[h!]
	\begin{center}
		\includegraphics[width=0.6\textwidth]{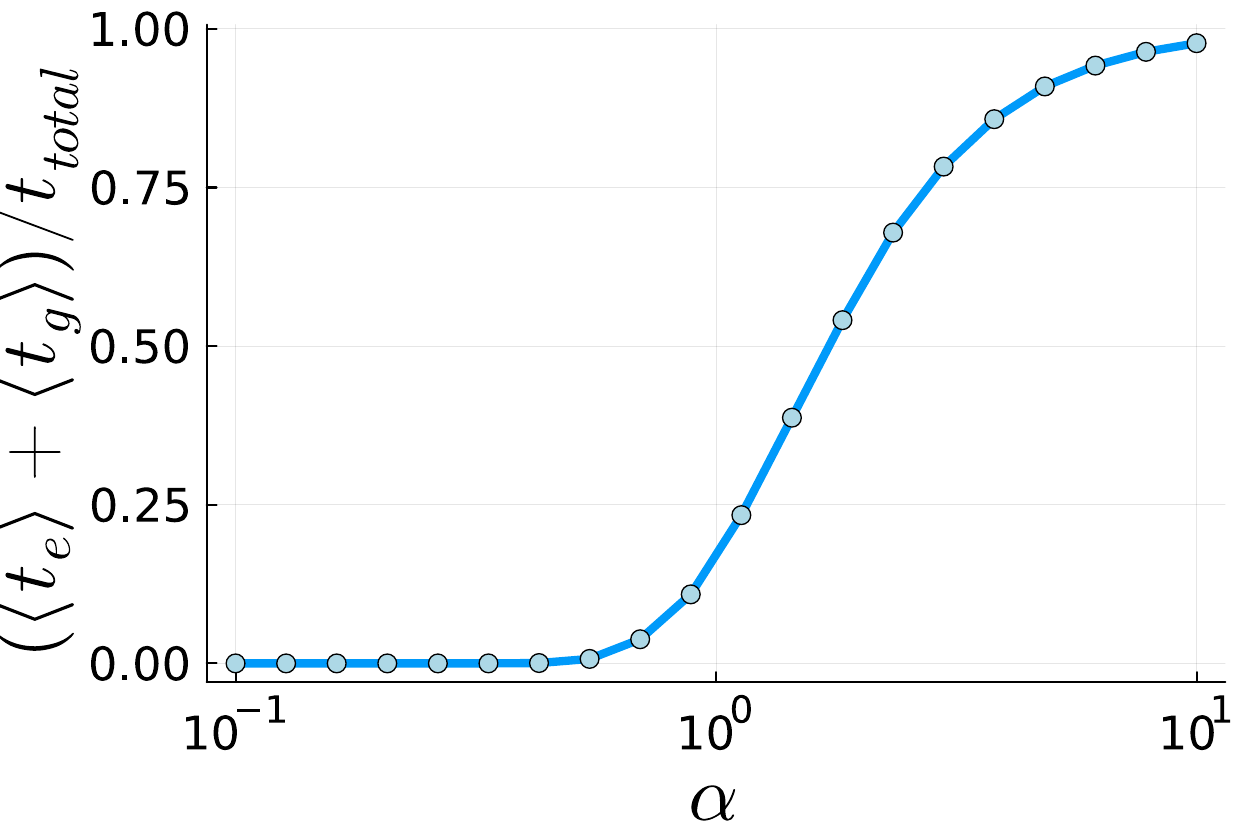}
	\end{center}
	\caption{Ratio of time spent in either the upper or lower state to the total evolution time, defined as in Figure~\ref{fig:ratio}, averaged over $10^8$ time steps of one simulation, while varying relative strength between incoherent pumping and the random walk. Parameter values were $GT = 0.6$, $dt = 10^{-4}$, and $T =1$.}\label{fig:varying-alpha}
\end{figure}

This clear telegraph behavior occurs when the collapse time, or measurement time, is short compared to the time scale for the driving term $GT$ to act. If the deterministic driving term is stronger, then we expect that the system will be forced away from an end point rapidly. This physically corresponds to the case when $GT$ is kept constant, but $G$ is increased (stronger electromagnetic field is used) and $T$ is decreased (faster decay); in photon language, the system is continuously rapidly absorbing and emitting photons. Figures~\ref{fig:one-path-unfiltered}(b) and (c) show example trajectories as $\alpha$ is reduced; in the limit of strong driving, the system settles to the steady-state value (\ref{avgstat}). Figure~\ref{fig:varying-alpha} shows how the telegraph behavior fades away as the ratio of the driving force to the measurement time increases: for driving weak compared to the noise (high $\alpha$), the system is almost always in one state or the other, while for strong driving, it spends much more time in random motion in a superposition in the middle of the two states.

\section{Comparison to other mode switching systems}
\label{sect.classsical}

\subsection{Weak measurement}

Telegraph behavior has been studied in the weak measurement literature, as well as the
quantum Zeno effect~\cite{jordan, korotkov1999continuous,WisemanMilburnQuantumMeasurement2010, PercivalQuantumState1998}.
A potential source for switching behavior is from
the randomness introduced to both the decoherence and incoherence by the measurement process,
via the following set of equations~\cite{jordan}:
\begin{align*}
	\frac{dx}{dt} & = -\frac{x}{2\tau_m} - \frac{xz}{\sqrt{\tau_m}} \, \xi - \epsilon y,            \\
	\frac{dy}{dt} & = -\frac{y}{2\tau_m} - \frac{yz}{\sqrt{\tau_m}} \, \xi + \epsilon x - \Delta z, \\
	\frac{dz}{dt} & = \frac{1 - z^2}{\sqrt{\tau_m}} \, \xi + \Delta y.
\end{align*}
Here, $\Delta$ is the tunneling rate, $\xi$ is a Gaussian process, $\tau_m$ is the characteristic meaurement time.
When there is no tunneling, the third equation for $z$ is exactly that of~\eqref{eq:SDE} without the pumping.
With this system, quantum Zeno effect was studied by varying $\tau_m$, which represents the noise introduced to the system by measurements.

In our system, we do not need the coherence to explicitly evolve randomly
and can still study the effect of noise to give the quantum Zeno effect by varying $\alpha$ (see Fig.~\ref{fig:one-path-unfiltered}).
We therefore see that in order to get telegraph jumps that give the Born statistics of quantum particles, it is sufficient to assert a deterministic pumping term as used in (\ref{UBloch}) and a noise term of the form (\ref{Ustoch}).

\subsection{Classical switching system based on potential wells}

Figure~\ref{QMexpt}(a) shows data from an experiment on quantum telegraph behavior, namely a trapped ion under continuous illumination and observation. As seen in this figure, our model simulates well the telegraph results. Similar behavior has been seen in the coupling of a single two-level system to an external macroscopic measurement system in the presence of continuous pumping
(see Refs.~\cite{JiangYaoSiyuChenWenluShiWilsonHoQuantumStochastic2025, HanzeMcMurtrieBaumannMalavoltiCoppersmithLothQuantumStochastic2021} and references therein).

\begin{figure}[h!]
	\begin{center}
		\begin{center}
			\includegraphics[width=0.6\textwidth]{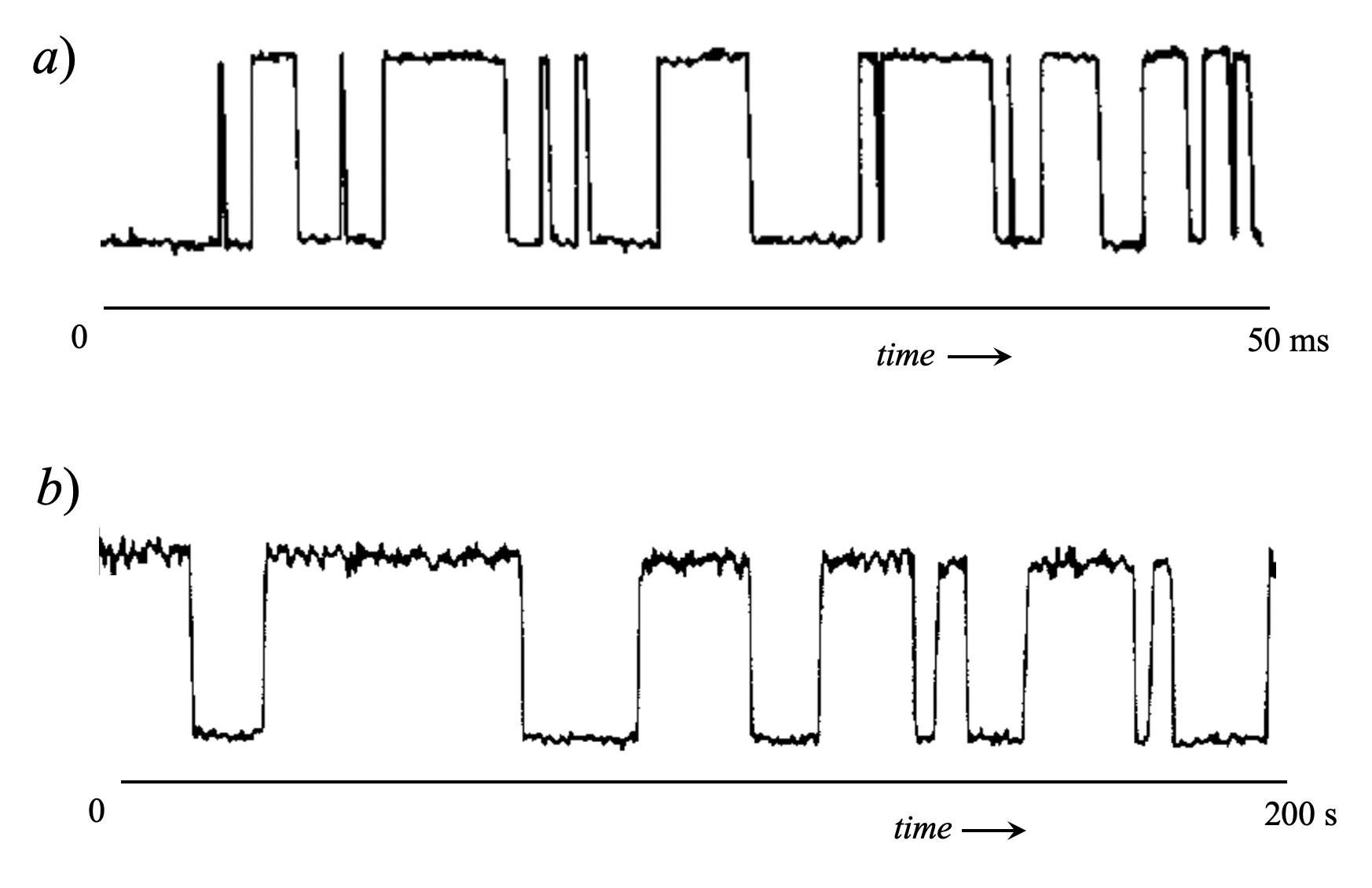}
		\end{center}
	\end{center}
	\caption{a) Data for the state of a single ion under constant illumination, showing telegraph behavior.
		Reprinted from~\cite{SauterBlattNeuhauserToschekQuantumJumps1988}.
		b) Data for the frequency of laser emission as a function of time when there are two coupled modes.   Reprinted from \cite{MorkTromborgMechanismMode1990}.}
	\label{QMexpt}
\end{figure}

Figure~\ref{QMexpt}(b) shows the temporal behavior of a classical system in which a laser hops between two metastable modes. As seen in this figure, a telegraph behavior is seen which is similar to that of the quantum single-ion behavior. A similar system was considered in Ref.~\cite{JordanSukhorukovTransportStatistics2004}.
%Transport Statistics of Bistable Systems, Andrew N. Jordan and Eugene V. Sukhorukov, Phys. Rev. Lett. 93, 260604 (2005)

It is of interest to ask whether these systems merely have a superficial similarity of appearance, or if there are underlying common properties.
Both of these systems are characterized by having both deterministic forces and stochastic noise. However, the noise plays a different role in the two cases. In the case of the classical telegraph switching, the system can be modeled as having two local minima in a potential energy profile~\cite{MorkTromborgMechanismMode1990}
so that the deterministic force drives the system to sit stably in one of these two minima. The noise in this case has the role of destabilizing the behavior, when a large excursion due to noise takes the system out of one of the stable minima.

By contrast, in the quantum mechanical two-state model here, the deterministic ``force'' has just one stable point, given by (\ref{avgstat}). In this case, the noise has the role of stabilizing the system at one of the two end points {\em away} from this minimum, similar to how noise can lead to metastable behavior in an inverted pendulum. An example of such process is geometric Brownian motion \cite{OksendalStochasticDifferential2003}.

Nevertheless, there is a basic similarity in that both can be described as ``truncated chaotic swings.'' Each system naturally would have large excursions, but has two ``traps'' that catch the state of the system for lengthy periods of time. The stochastic noise in each case means that the system will not stay in one of these traps forever, and will eventually pop out, to then be trapped again.

%Lastly, classical system is also the foundation of the theory of quantum stochastic rectification (QSR) Figure~\ref{QMRes} shows the spin states of an Fe atom being observed in the experiment in~\cite{HanzeMcMurtrieBaumannMalavoltiCoppersmithLothQuantumStochastic2021}. Telegraph switching behavior is also present but as a result of periodic voltage drive.It would be interesting to see if the model developed in our work can be used as an alternative theory to QSR.

%\begin{figure}[h!]
%	\begin{center}
%		\begin{center}
			%\includegraphics[width=0.7\textwidth]{Figures/sciadv.abg2616-f1.jpg}
%		\end{center}
%	\end{center}
%	\caption{Data for the spin states of an Fe atom under periodic voltage drive, showing telegraph behavior.
%		Reprinted from~\cite{HanzeMcMurtrieBaumannMalavoltiCoppersmithLothQuantumStochastic2021}.}
	\label{QMRes}
%\end{figure}

\section{Dependence of the Long-Term Behavior of $U_3$ on $\alpha$}
\label{sec:alpha-dependence}

Figures \ref{fig:one-path-unfiltered} and \ref{fig:varying-alpha} show that depending on the strength of the quantum noise parameter $\alpha$, there can be three qualitatively different behaviors. Large $\alpha$ leads to telegraph behavior, intermediate values of $\alpha$ cause noisy trajectories, while small values of $\alpha$ cause the process to behave mostly deterministically. Since we can only distinguish these behaviors in the long run (after $t \approx GT$, see Appendix \ref{sect.stat}), what is important is the behavior of $U_3$ in the $t \to \infty$ limit. Mathematically, that is what the stationary probability density $\rho(y)$ describes. This density would satisfy $\displaystyle \mathbb{P}(a \leq U_3 \leq b) = \int_a^b \rho(y) \,dy$ for all large times $t$.

We used the Fokker-Planck equation (see Ref.~\cite{capasso-bakstein}, Section 4.4) to derive the density $\rho(y)$ for our process $U_3$. (More details can be found in Appendix \ref{sec:fokker-planck-derivation}.) We found that
\begin{equation}\label{invariant-density-explicit}
	\rho(y) = \frac{N(\alpha, G,T)}{(1-y^2)^{2}}\exp\left\{\frac{2}{\alpha^2}\int_0^y\,\frac{-(1+q)/T + G(1-q)}{(1-q^2)^2}\;dq\right\},
\end{equation}
%\begin{equation}\label{invariant-density}
%	\rho(y) = \frac{N(\alpha,G,T)}{\sigma(y)^2}\exp\left\{\int_0^y\frac{2 u(q)}{\sigma(q)^2}\;dq\right\},
%\end{equation}
\noindent where $N(\alpha, G,T)$ is a normalization constant so that $\displaystyle \int_{-1}^1 \rho(y) = 1$.  Plots of $\rho(y)$ for various values of $\alpha$ were done using \textit{Mathematica} \cite{Mathematica} and are shown in Figure~\ref{Fokker}. The stationary distribution profiles agree with the results of numerical simulations in Figures \ref{fig:one-path-unfiltered} and \ref{fig:varying-alpha}.

\begin{figure}[h!]
	\begin{center}
		\begin{center}
			\includegraphics[width=0.85\textwidth]{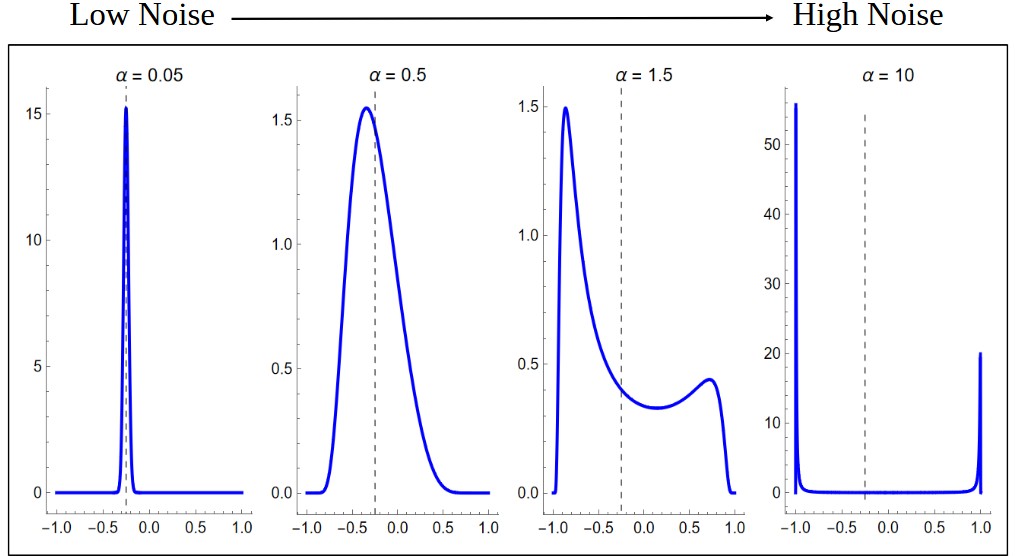}
		\end{center}
	\end{center}
	\caption{Probability distribution $\rho(y)$ of the system with $T = 1$, $G= 0.6$, and $t\rightarrow \infty$, computed using the Fokker-Planck equation, for four values of $\alpha$, the strength of the quantum noise term. The dashed vertical line represents the mean $(GT-1)/(GT+1)$.}
	\label{Fokker}
\end{figure}

\newpage

\section{Conclusions}

We have shown that a qubit subjected to constant driving and decay and continuous measurement will exhibit ``quantum jumps'' that appear as photon emission and absorption, but the mathematics involved requires no concept of a photon as a corpuscle; the behavior is the result of a noisy random walk in the presence of a deterministic driving force (Eq. \eqref{eq:SDE}).

The key feature of this model that is different from typical switching
process is that it does not require a Poisson process to initiate the ``jumps''.
The trajectories of the model in this work are completely continuous (as opposed to discontinuous trajectories coming from a Poisson process).
It also differs from the classical model that exhibits telegraph switching behavior
in a fundamental way: the stochastic noise here stabilizes the process at the upper and lower states while the stochastic noise in the classical model destabilizes the system, creating the ``jumps''.

It is, of course, possible to insist on the photon absorption/emission picture, and treat the quantum noise term (\ref{Ustoch}) as simply giving the knowledge due to weak continuous measurements. Nevertheless, this result shows that a model of continuous, spontaneous collapse is also viable in giving results that agree with experiments.

	{\bf Acknowledgements}. We thank the Pittsburgh Quantum Institute for a visiting scientist travel grant, and the Department of Mathematical Sciences at Carnegie Mellon University for the use of a computer cluster. We thank Andrew Jordan, Gautam Iyer and Bob Pego for helpful discussions.

\appendix

\section{Derivation of the driven-dissipative Bloch equations}
\label{blochder}

We start with a generic two-level system with state $|\psi\rangle = \alpha| g\rangle + \beta |e\rangle$. The density matrix for this system is 
\begin{equation}
\hat{\rho} = \left(
\begin{array}{cc}
|\alpha|^2 & \alpha^* \beta \\
\beta^*\alpha & |\beta|^2
\end{array}
\right),
\end{equation}
In the standard perturbation approach \cite{snokeQM}, we write the Hamiltonian $H = H_0 + \hat{V}$, with the first-order evolution given by
\begin{eqnarray}
\frac{d}{dt}  \langle \hat\rho_{i,j}\rangle &=&  \frac{i}{\hbar} \displaystyle \sum_n |\alpha_n|^2 \langle [ H_0+  \hat{V} ,\hat\rho_{i,j}]\rangle
\label{firstorder}
\end{eqnarray}
and the second-order evolution given by the Lindbladian,
\begin{eqnarray}
\frac{d}{dt} \langle \hat\rho_{i,j}\rangle = \frac{2\pi}{\hbar}\langle  \bigl(\hat{V}\hat\rho_{i,j}\hat{V} -\frac{1}{2} \hat{V}\hat{V}\hat\rho_{i,j} -\frac{1}{2} \hat\rho_{i,j} \hat{V}\hat{V}\bigr)\rangle \ \delta(\tilde{E}),  
\label{lindrho}
\end{eqnarray}
where the final $\delta$-function gives the energy conservation for any initial and final state.

The Hamilton for the two-state system coupled to a driving field is
 \begin{eqnarray}
H_0 =  \left(
\begin{array}{cc}
 \hbar \omega_0 & \displaystyle \frac{\hbar\omega_R}{2}e^{-i\omega t} \\
\displaystyle\frac{\hbar\omega_R}{2}e^{i\omega t} & 0
\end{array}
\right),
\label{BlochH}
 \end{eqnarray}
while the coupling to the environment is given by
\begin{eqnarray}
\hat{V} = \sum_q A_q (a_q^{ } b^\dagger_e b_g + a_q^\dagger b^\dagger_g b_e),
\end{eqnarray}
where $b^\dagger_i$, $b^{ }_i$ are the fermionic creation and destruction operators for state $i$, $a^\dagger_q$, $a^{ }_q$ are bosonic creation and destruction operators for a photon in state $q$, and
$A_q$ is a constant giving the strength of the interaction to emit or absorb a photon. This term vanishes in the first-order evolution term (\ref{firstorder}), since no phase coherence of the photons in the environment is assumed, apart from the driving laser, already accounted for in (\ref{BlochH}).

%2(1-Ng)(1-Ng)(1-Ng)-2(1-Ng)(1-Ng)Ng
%=2(1 - Ng -Ng + Ng -Ng + Ng +Ng -Ng)
%- 2 Ng + 2 Ng +2Ng -2 Ng
%= 2 (1-Ng) = 2 Ne

%2 rho Ne *(1-Ng)
%-rho *Ng*(1-Ne)   - rho* Ne*(1-Ng)  =  rho *(2Ne - Ng - Ne)
%bg*Ng = Ng*bg +bg
For spontaneous emission, we assume no photons initially.
Writing $\hat{\rho}_{ij} = \langle b^\dagger_i b^{ }_j\rangle$,
 and using $N_e + N_g =1$ and $(b_i)^2 = (b^\dagger_i)^2 = 0$, we obtain
\begin{eqnarray}
\label{Lindterms}
\\
&&\langle 2(a^{ }_q b^\dagger_e b^{ }_g) 
b^\dagger_e b^{ }_e (a^\dagger_q b^\dagger_g b^{ }_e) 
- (a_q b_e^\dagger b^{ }_g)(a^\dagger_q b_g^\dagger b^{ }_e)b^\dagger_e b^{ }_e - b^\dagger_e b^{ }_e
(a_q b_e^\dagger b^{ }_g)(a^\dagger_q b_g^\dagger b^{ }_e) \rangle = -2\langle N_e\rangle \nonumber \\
&&\langle 2(a^{ }_q b^\dagger_e b^{ }_g) 
b^\dagger_g b^{ }_g (a^\dagger_q b^\dagger_g b^{ }_e) 
- (a_q b_e^\dagger b^{ }_g)(a^\dagger_q b_g^\dagger b^{ }_e)b^\dagger_g b^{ }_g 
- b^\dagger_g b^{ }_g
(a_q b_e^\dagger b^{ }_g)(a^\dagger_q b_g^\dagger b^{ }_e)\rangle = 2\langle N_e\rangle \nonumber \\
&&\langle 2(a^{ }_q b^\dagger_e b^{ }_g) 
b^\dagger_e b^{ }_g (a^\dagger_q b^\dagger_g b^{ }_e) 
- (a_q b_e^\dagger b^{ }_g)(a^\dagger_q b_g^\dagger b^{ }_e)b^\dagger_e b^{ }_g 
- b^\dagger_e b^{ }_g
(a_q b_e^\dagger b^{ }_g)(a^\dagger_q b_g^\dagger b^{ }_e)\rangle =  -\langle b_e^\dagger b^{ }_g \rangle
\nonumber \\
%&& = (b^\dagger_e N_e)b_g - b^\dagger_e (b_g N_g) = (N_e b^\dagger_e - b^\dagger_e)b_g - b^\dagger_e ( N_g b_g + b_g)
\nonumber
&&\langle 2(a^{ }_q b^\dagger_e b^{ }_g) b^\dagger_g b^{ }_e (a^\dagger_q b^\dagger_g b^{ }_e) 
- (a_q b_e^\dagger b^{ }_g)(a^\dagger_q b_g^\dagger b^{ }_e)b^\dagger_g b^{ }_e 
- b^\dagger_g b^{ }_e(a_q b_e^\dagger b^{ }_g)(a^\dagger_q b_g^\dagger b^{ }_e)\rangle = - \langle b_g^\dagger b^{ }_e \rangle.
\nonumber
\end{eqnarray}
Defining
\begin{eqnarray}
U_1 &=& \beta^*\alpha + \alpha^*\beta  \nonumber\\
U_2 &=& i(\beta^*\alpha- \alpha^*\beta) \nonumber\\
U_3 &=& |\beta|^2 -|\alpha|^2,
\label{blochcomp}
\end{eqnarray}
and $|\alpha|^2 + |\beta|^2 = 1$,
we then have for the Lindbladian term
\begin{eqnarray}
\frac{\partial U_3}{\partial t} &=& - \frac{1}{T_1}2|\beta|^2 =  - \frac{1}{T_1}(U_3+1),
\end{eqnarray}
where $T$ is the characteristic photon emission time, and
\begin{eqnarray}
\frac{\partial U_1}{\partial t} &=&  - \frac{1}{2T_1}U_1, \nonumber\\
\frac{\partial U_2}{\partial t} &=&  - \frac{1}{2T_1}U_2.
\end{eqnarray}
These three equations give the standard dissipation terms in the Bloch equations. 

We can now adopt the rotating frame and writing
$|\psi\rangle = \alpha'|v\rangle+\beta' e^{-i\omega_0t}|c\rangle$. Writing the Bloch vector in terms of $\alpha'$ and $\beta'$, for the case when the coherent driving field frequency $\omega$ equals $\omega_0$, Schr\"odinger's equation for the Hamiltonian (\ref{BlochH}) plus the dissipative terms give the standard equations in the rotating frame,
\begin{eqnarray}
\displaystyle \frac{\partial U'_1}{\partial t} &=& \displaystyle - \frac{U'_1}{T_2} \nonumber\\
\displaystyle \frac{\partial U'_2}{\partial t} &=& \displaystyle - \frac{U'_2}{T_2} - \omega_R U'_3
\nonumber\\
\displaystyle \frac{\partial U'_3}{\partial t} &=& \displaystyle - \frac{U'_3+1}{T_1} +\omega_R U'_2 ,
\label{Blocheq}
\end{eqnarray}
with $T_2 = 2T_1$. This assumes that the
the photon emission gives both relaxation from the upper state and also decoherence; it is also possible to have additional decoherence processes that do not lead to relaxation, giving an effective $T_2$ that is shorter.

We can now take exactly the same approach to account for incoherent pumping. In this case, analogous to (\ref{Lindterms}), we start with an initial state with one photon in state $q$, and write
\begin{eqnarray}
\\
&&\langle 2(a^\dagger_q b^\dagger_g b^{ }_e) 
b^\dagger_e b^{ }_e (a_q b^\dagger_e b^{ }_g) 
- (a^\dagger_q b_g^\dagger b^{ }_e)(a_q b_e^\dagger b^{ }_g)b^\dagger_e b^{ }_e 
- b^\dagger_e b^{ }_e
(a^\dagger_q b_g^\dagger b^{ }_e)(a_q b_e^\dagger b^{ }_g) \rangle = 2\langle N_g\rangle \nonumber \\
&&\langle 2(a^\dagger_q b^\dagger_g b^{ }_e) 
b^\dagger_g b^{ }_g (a_q b^\dagger_e b^{ }_g) 
- (a^\dagger_q b_g^\dagger b^{ }_e)(a_q b_e^\dagger b^{ }_g)b^\dagger_g b^{ }_g 
- b^\dagger_g b^{ }_g
(a^\dagger_q b_g^\dagger b^{ }_e)(a_q b_e^\dagger b^{ }_g) \rangle = - 2\langle N_g\rangle . \nonumber 
\end{eqnarray}
These give us, for the same definitions as above,
\begin{eqnarray}
\frac{\partial U_3}{\partial t} &=&  2G|\alpha|^2 =   G(1-U_3),
\end{eqnarray}
which is the incoherent pumping term used in the main text. Similarly, the off-diagonal terms are
\begin{eqnarray}
\\
&&\langle 2(a^\dagger_q b^\dagger_g b^{ }_e) 
b^\dagger_e b^{ }_g (a_q b^\dagger_e b^{ }_g) 
- (a^\dagger_q b_g^\dagger b^{ }_e)(a_q b_e^\dagger b^{ }_g)b^\dagger_e b^{ }_g 
- b^\dagger_e b^{ }_g
(a^\dagger_q b_g^\dagger b^{ }_e)(a_q b_e^\dagger b^{ }_g)\rangle = - \langle b_e^\dagger b^{ }_g \rangle,
\nonumber \\
&&\langle 2(a^\dagger_q b^\dagger_g b^{ }_e) 
b^\dagger_g b^{ }_e (a_q b^\dagger_e b^{ }_g) 
- (a^\dagger_q b_g^\dagger b^{ }_e)(a_q b_e^\dagger b^{ }_g)b^\dagger_g b^{ }_e 
- b^\dagger_g b^{ }_e
(a^\dagger_q b_g^\dagger b^{ }_e)(a_q b_e^\dagger b^{ }_g)\rangle =  -\langle b_g^\dagger b^{ }_e \rangle
\nonumber 
%&& = (b^\dagger_e N_e)b_g - b^\dagger_e (b_g N_g) = (N_e b^\dagger_e - b^\dagger_e)b_g - b^\dagger_e ( N_g b_g + b_g)
\end{eqnarray}
which give decoherence due to the incoherent pumping. This  can then be added to the first two equations of (\ref{Blochstandard}) to give an adjusted $T_2$ time.

%2Ng*(1-Ne) - Ng*(1-Ne) - (1-Ng)*Ne
%= 2Ng - Ng - Ne = Ng-Ne

%2 rho Ne *(1-Ng) -rho *Ng*(1-Ne)   - rho* Ne*(1-Ng)  =  rho *(2Ne - Ng - Ne) = rho*(Ne-Ng)

%bg*Ng = Ng*bg +bg

%2 (1-Ne)(1-Ne) Ng -  (1-Ne)  Ng Ne  - Ne * Ng *(1-Ne) 
%= 2Ng - Ng * (1-Ng) - Ng *(1-Ng)
%= 2 Ng 

Note that the Lindbladian approach used here gives {\em irreversible} behavior, namely deterministic decay, but cannot give {\em stochastic} behavior, that is, different results for the same initial conditions. This is fundamentally because any approach derived from unitary quantum mechanics, including the Lindbladian term used here, acts only on the many-body wave function of the system plus its enviroment, and therefore gives deterministic behavior. This is fundamentally the same result as found in Ref.~\onlinecite{SnokeLiuGirvinBasisSecond2012}.
%D.W. Snoke, G.-Q. Liu, and S.M. Girvin, ``The basis of the second law of thermodynamics in quantum field theory,''  {\em Annals of Physics} {\bf 327}, 1825 (2012). 

\section{Justification of the Statistics}
\label{sect.stat}

In this section, we provide mathematical justification for the fact that $\displaystyle \langle t_e \rangle/\langle t_g\rangle \approx G T$  when $\alpha$ is a fixed large number, as shown in Figure \ref{fig:ratio}. (In Section \ref{sec:alpha-dependence}, we justify the qualitatively different behavior of trajectories in Figures \ref{fig:one-path-unfiltered} and \ref{fig:varying-alpha} as a function of $\alpha$.) Starting from \eqref{eq:SDE}, let us consider the shifted process $V = U_3 + 1$. The process $V$ satisfies

\begin{equation}
	\label{eq:SDE2}
	d V = \left(-\frac{V}{T} + G(2 -V) \right)  dt + \alpha V(2-V) dW_t \,.
\end{equation}

For large $\alpha$, Figure \eqref{fig:one-path-unfiltered} (a) shows that the process $U_3$ exhibits telegraph behavior. We can therefore neglect the time spent in transition between the metastable states and estimate the ratio $\displaystyle \langle t_e \rangle/\langle t_g\rangle$ by the average amount of time any qubit starting in the upper state takes to decay out of that state divided by the average time it takes for any qubit starting in the lower state to become excited out of that state. More precisely, if $0 < \delta \ll 1$, we can define the upper state as the interval $[2-\delta, 2]$ and the lower state as the interval $[0,\delta]$. (In Figure \ref{fig:ratio}, we took $\delta = 0.05$.) In the large $\alpha$ limit, for $V_0 = 2$ we therefore should have $\displaystyle \langle t_e \rangle = \mathbb{E}[\tau | V_0 = 2]$, and $\displaystyle \langle t_g \rangle = \mathbb{E}[\tau | V_0 = 0]$, where $\displaystyle \tau = \inf_{t > 0}\{\delta < V_t < 2-\delta\}$, and $\mathbb{E}$ denotes the conditional expectation.

Computing the exit time $\tau$ may be difficult to do directly for the general case of \eqref{eq:SDE2}. However, if $V \approx 0$, then the drift term in $\eqref{eq:SDE2}$ is well-approximated  by the constant drift $2G$, and the noise term is well-approximated by the linear term proportional to $2\alpha V$ (see Appendix \ref{sec:linear-justification}). Therefore, for $V \approx 0$, we have that $X_t \approx V_t$, where $X_t$ satisfies:

\begin{equation}
	\label{eq:SDE2-approx-lower}
	d X = 2 G  dt +2 \alpha X dW_t \,.
\end{equation}

The exit time of $X_t$ is $\displaystyle \tau' = \inf_{t > 0}\{\delta < X_t < 2-\delta\}$. We can use Dynkin's formula (see Ref. \cite{capasso-bakstein}, p. 195, formulas (4.56)-(4.57)) to compute this approximate exit time. In the case $X_0 = 0$, we need to solve the second-order ODE

\begin{equation}\label{eq:ODE-approx-lower}
	\displaystyle 2G\frac{df}{dx} + \frac{(\alpha x)^2}{2}\frac{d^2f}{dx^2} = -1,\; 0 \leq x < \delta, \quad  f(\delta) = 0.
\end{equation}

\noindent The only \textit{bounded} solutions to the ODE in \eqref{eq:ODE-approx-lower} on the interval $x \in [0,\delta]$ are linear functions of the form $f(x) = -x/(2G) + c_1$. Enforcing the boundary condition  $f(\delta) = 0$ yields $f(x) = (\delta -x )/(2G)$. We therefore have that $\tau' = f(0) = \delta /(2G)$. We deduce that $\langle t_g \rangle \approx \delta/(2G)$.

To find $\langle t_e \rangle$, we can follow the same approach for $V \approx 2$. In this case, the drift term in $\eqref{eq:SDE2}$ is well-approximated  by the constant drift $-2/T$, and the noise term is approximately proportional to $-2\alpha(V-2)$ (see Appendix \ref{sec:linear-justification}). Therefore, for $V \approx 2$, we have that $X_t \approx V_t$, where $X_t$ satisfies:

\begin{equation}
	\label{eq:SDE2-approx-upper}
	d X = -(2/T)  dt -2 \alpha(X-2) dW_t \,.
\end{equation}

In this case, $X_0 = 2$, and we need to solve this second-order ODE:

\begin{equation}\label{eq:ODE-approx-upper}
	\displaystyle -(2/T)\frac{df}{dx} + 2\alpha^2(x-2)^2\frac{d^2f}{dx^2} = -1,\; 2-\delta < x \leq 2, \quad  f(2-\delta) = 0.
\end{equation}

\noindent The bounded solution on the interval $x \in [2-\delta, 2]$ is $f(x) = T(x - 2 + \delta)/2$. We therefore have that $\tau' = f(2) = T\delta /2$. In other words,  $\langle t_e \rangle \approx T\delta/2$.

Finally, we now see that $\displaystyle \frac{\langle t_e \rangle}{\langle t_g\rangle} \approx \frac{T\delta/2}{ \delta/(2G)} = G T$  when $\alpha$ is large.

\section{A justification for the linear approximation near the endpoints}
\label{sec:linear-justification}

In this section, we give an estimate of the difference between the between the exact solution of~\eqref{eq:SDE2}
and the approximate solution of \eqref{eq:SDE2-approx-lower} when the initial data is in $(0,\delta)$, before
exiting this region.

\begin{lemma}
	\label{lem:bounded-exit}
	Suppose $\delta \ll GT$ be such that $2GT > \delta(1 + GT)$.
	Let $V$ be the solution of~\eqref{eq:SDE2}  with initial data $V_0 = x \in (0,\delta)$ and
	$\tau = \inf \left\{ t>0: V_t = \delta  \right\}$.
	Then, there exists a constant $C$ such that
	\begin{equation}
		\sup_{x\in (0,\delta)} \E^x[\tau] \leq \delta/G\,.
		\label{ine:exit-time}
	\end{equation}
\end{lemma}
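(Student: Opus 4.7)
The plan is to apply Dynkin's formula with a carefully chosen linear test function and to exploit the fact that the drift of \eqref{eq:SDE2} is uniformly positive near $V = 0$, so that the diffusion coefficient $\alpha V(2-V)$ plays no essential role in bounding the exit time from $(0,\delta)$. The intuition is that the deterministic drift alone would carry $V$ from $0$ to $\delta$ in time at most $\delta/(2G - \delta(1/T+G))$, and one only needs to verify that the (vanishing at the endpoints) noise does not spoil this.

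Concretely, I would first record the infinitesimal generator of $V$,
\begin{equation}
\mathcal{L} f(v) \;=\; \left(-\frac{v}{T} + G(2-v)\right) f'(v) + \frac{\alpha^2}{2}\, v^2 (2-v)^2\, f''(v),
\end{equation}
and then take the linear test function $f(v) = (\delta - v)/G$. Because $f'' \equiv 0$, the diffusion term drops out entirely, and
\begin{equation}
\mathcal{L}f(v) \;=\; -\frac{1}{G}\Big(2G - v\big(\tfrac{1}{T} + G\big)\Big) \;=\; -2 + \frac{(1+GT)\,v}{GT}.
\end{equation}
By the hypothesis $2GT > \delta(1+GT)$, this gives the uniform lower bound $-\mathcal{L}f(v) \ge c := 2 - \delta(1+GT)/(GT) > 0$ on $[0,\delta]$. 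A short inspection shows the companion assumption $\delta \ll GT$ forces $c \ge 1$, so the $1/c$ factor appearing below will be harmless and the advertised constant $C$ can be taken to be $1$.

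Next, I would apply Dynkin's formula to the stopped process $V_{\tau \wedge t}$. While $V$ lives in the compact set $[0,\delta]$, the coefficients of \eqref{eq:SDE2} are bounded, the local martingale in Itô's formula is a true martingale, and
\begin{equation}
f(x) - \mathbb{E}^x\!\left[f(V_{\tau\wedge t})\right] \;=\; \mathbb{E}^x\!\left[\int_0^{\tau\wedge t} (-\mathcal{L}f)(V_s)\,ds\right] \;\ge\; c\,\mathbb{E}^x[\tau \wedge t],
\end{equation}
using $f \ge 0$ together with the lower bound on $-\mathcal{L}f$. Rearranging gives $\mathbb{E}^x[\tau \wedge t] \le f(x)/c \le \delta/(Gc) \le \delta/G$ uniformly in $t$ and in $x \in (0,\delta)$; monotone convergence as $t \to \infty$ then yields the claimed bound.

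The only genuine subtlety, and what I expect to be the main point requiring care, is justifying Dynkin's formula before we know $\tau$ is integrable. This is sidestepped cleanly by the $\tau \wedge t$ truncation above: it avoids any circularity, and the uniform-in-$t$ bound on $\mathbb{E}^x[\tau \wedge t]$ furnishes $\mathbb{E}^x[\tau] < \infty$ as a byproduct. Strong existence and uniqueness of \eqref{eq:SDE2} up to the exit time from $(0,\delta)$ is also needed, but follows from the local Lipschitz regularity of its coefficients on any compact subset of $(0,2)$, which is precisely where the process lives while $0 < V_t < \delta$.
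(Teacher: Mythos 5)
Your proof is correct and uses essentially the same idea as the paper: the identical linear test function $g(v)=(\delta-v)/G$ together with the observation that the drift alone makes $-\mathcal{L}g\ge 1$ on $(0,\delta)$ under the stated hypothesis, the diffusion term vanishing because $g''\equiv 0$. The only difference is presentational --- the paper phrases this as a supersolution/comparison-principle argument for the ODE satisfied by $\E^x[\tau]$, while you run the equivalent probabilistic argument via Dynkin's formula on the stopped process $V_{\tau\wedge t}$, which has the minor advantage of establishing $\E^x[\tau]<\infty$ as a byproduct rather than presupposing that $\E^x[\tau]$ is the bounded solution of the boundary-value problem.
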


{
%\color{red}
\begin{proof}
	Let $f = \E^x \tau$. Then $f$ is the bounded solution of the following equation
	\begin{equation}
		\label{eq:ODE-full}
		- \frac{1}{T}\left( - x + GT (2 - x) \right) f'
		- \frac12\alpha^2 x^2(2-x)^2 f'' = 1
	\end{equation}
	with boundary condition $f(\delta) = 0$.
	We note that the endpoint $x = 0$ is inaccessible so
	one does not specify boundary condition there~\cite{feller_parabolic_1952, feller_diffusion_1954}.
	Let $g(x) = ( \delta - x )/G$.
	Note that for $x \in (0,\delta)$, $g(x) \leq \delta/G$
	is a supersolution of~\eqref{eq:ODE-full}  with boundary $g(\delta) = 0$.
	To see this, we note that for $\delta$ satisfying the assumption,
	\begin{align*}
		 & -\frac{1}{T} ( - x + GT(2-x))\left(\frac{-1}{G} \right)            \\
		 & \geq \frac{1}{GT}\left( -\delta(1 + GT) + 2GT   \right) \geq 1 \,.
	\end{align*}
	By comparison principle from the theory of partial differential equations~\cite{evans_partial_2010}, $g \geq f$
	and estimate~\eqref{ine:exit-time} then follows as $g \leq \delta/G$.
\end{proof}

The main result of this section is the following proposition.
\begin{proposition}
	Let $V$ be the solution of~\eqref{eq:SDE2} and $X$ solution of~\eqref{eq:SDE2-approx-lower} with initial data $V_0 = X_0 = x \in (0,\delta)$.
	Let $Y = V-X$.
	Then, there exists a constant $C>0$ such that
	\begin{equation}
		\E Y_{t\wedge \tau}^2 \leq
		C\delta^3\left( \left(\frac1T + G \right)^2 + \frac12\alpha^2 \delta^4\right)  e^{12t} \,.
		\label{ine:endpoints}
	\end{equation}
\end{proposition}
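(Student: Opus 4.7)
The plan is to apply It\^o's formula to $Y_t^2$ and close the loop with Gronwall's inequality, using Lemma~\ref{lem:bounded-exit} to convert a factor of $\E[t\wedge\tau]$ into the uniform bound $\delta/G$. The essential algebraic step is subtracting the two SDEs to obtain
\[
dY_t = -V_t\left(\frac{1}{T} + G\right)dt + \alpha\left(2Y_t - V_t^2\right)dW_t,
\]
so that the $X$-dependence in the diffusion coefficient cancels cleanly and only $V$ and $Y$ appear. This is what makes the estimate tractable: $X$ itself is not stopped at $\delta$ and is not easy to bound a priori, but we never need to.

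Next I would apply It\^o's formula to $Y_t^2$ and localize via $\tau_n = \tau\wedge n\wedge \inf\{t : |Y_t| \geq n\}$ so that the stochastic integral becomes a genuine martingale; taking expectation at $t\wedge\tau_n$ and sending $n\to\infty$ by monotone convergence yields
\[
\E Y_{t\wedge\tau}^2 = \E\int_0^{t\wedge\tau}\left[-2Y_sV_s\left(\frac{1}{T}+G\right) + \alpha^2(2Y_s - V_s^2)^2\right]ds.
\]
On the event $\{s < \tau\}$ we have $V_s \in (0,\delta)$, so AM--GM gives $|2Y_sV_s(1/T+G)| \leq Y_s^2 + \delta^2(1/T+G)^2$ and the elementary expansion $(2Y_s - V_s^2)^2 \leq 8Y_s^2 + 2\delta^4$. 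Writing $f(t) = \E Y_{t\wedge\tau}^2$, these bounds combine to give
\[
f(t) \leq C_2\int_0^t f(s)\,ds + \left(\delta^2\left(\frac{1}{T}+G\right)^2 + 2\alpha^2\delta^4\right)\E[t\wedge\tau],
\]
for an explicit constant $C_2$ coming from the AM--GM splitting.

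Finally, I would invoke Lemma~\ref{lem:bounded-exit} to replace $\E[t\wedge\tau]$ by $\delta/G$, so that the inhomogeneous term takes the shape $C\delta^3\bigl((1/T+G)^2 + \alpha^2\delta^2\bigr)$ (compatible with the claimed right-hand side up to how $\delta$-powers are grouped into the overall constant), and Gronwall's inequality then supplies the exponential factor. I expect the main obstacle to be purely technical bookkeeping: tracking numerical constants through the AM--GM and Gronwall steps so that the exponent and $\delta$-prefactor match the stated inequality exactly, and carefully justifying the vanishing of the stochastic-integral expectation via the localization argument above. Conceptually this is a standard SDE-perturbation stability estimate; the one nonstandard ingredient is the use of the sharper exit-time bound from Lemma~\ref{lem:bounded-exit} in place of a trivial factor of $t$, and it is precisely that substitution which promotes the $\delta^2$ from the drift-squared term into the $\delta^3$ prefactor that appears in the conclusion.
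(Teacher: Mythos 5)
Your proposal follows essentially the same route as the paper's own proof: subtract the two SDEs so the diffusion coefficient becomes $\alpha(2Y_t - V_t^2)$, apply It\^o's formula to $Y^2$, bound the drift and diffusion contributions using $V_s\in(0,\delta)$ before the exit time together with AM--GM, replace $\E[t\wedge\tau]$ by the $\delta/G$ bound from Lemma~\ref{lem:bounded-exit}, and close with Gronwall. Your added localization argument for the vanishing of the stochastic integral, and your candor about tracking the numerical constants (the $Y_s^2$ coefficient and the grouping of $\delta$-powers), are if anything slightly more careful than the paper's version.
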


\begin{proof}
	Subtracting~\eqref{eq:SDE2-approx-lower} from~\eqref{eq:SDE2} , we have,
	\begin{align*}
		dY
		 & = \left( -\frac{V_t}{T} + G(2-V_t) - 2G \right) dt + \alpha \left( 2Y_t -V_t^2\right) \, dW_t     \\
		 & = \left( -\frac{V_t}{T} -GV_t \right) dt + \alpha \left( 2Y_t -V_t^2\right) \, dW_t           \,.
	\end{align*}
	For $V_0 = X_0 = x \in (0,\delta)$ and $t \leq \tau$, we know that $0 \leq V \leq \delta$.
	Therefore, by It\^o formula,
	\begin{equation*}
		dY_t^2 = 2Y_t \, dY_t + \alpha^2 ( 2Y_t - V_t^2)^2 \, dt
	\end{equation*}
	and by Cauchy-Schwarz inequality,
	\begin{align*}
		\E Y_{t\wedge \tau}^2
		 & = -2 \E\int_0^{t \wedge \tau} Y_s\left( \frac{V_s}{T} + GV_s   \right) \, ds
		+ \alpha^2 \E \int_0^{t \wedge \tau} ( 2Y_s - V_s^2)^2 \, ds                                           \\
		 & \leq 4 \E \int_0^{t \wedge \tau} \left( Y_s^2 + V_s^2\left( \frac{1}{T} + G \right)^2 \right) \, dt
		+ 2\alpha^2 \E \int_0^{t \wedge \tau} \left( 4Y_s^2 + V_s^4 \right) \, ds                              \\
		 & \leq 4\delta^2\left( \left(\frac1T + G \right)^2 + \frac12\alpha^2 \delta^4\right) \E [\tau]
		+ 12\E \int_0^{t} Y_{s\wedge \tau}^2 \, ds \,.
	\end{align*}
	By Gronwall's inequality, we have
	\begin{equation*}
		\E Y_{t\wedge \tau}^2 \leq
		4\delta^2\left( \left(\frac1T + G \right)^2 + \frac12\alpha^2 \delta^4\right) \E [\tau] e^{12t} \,.
	\end{equation*}
	Inequality~\eqref{ine:endpoints} then follows by Lemma~\ref{lem:bounded-exit}.
\end{proof}
}
When $t \approx \E[\tau]$ for example, the difference between $V$ and $X$ is of order $O(\delta^3)$, 
which means that $X$ is a good approximation of $V$.

\section{Details about the Fokker-Planck Equation}
\label{sec:fokker-planck-derivation}

In general, if a stochastic process $X_t$ admits a time-dependent density $\rho(s,x;t,y)$, then  whenever $t > s$, the density satisfies:
\begin{equation}
	\mathbb{P}(a \leq X_t \leq b | X_s = x) = \int_a^b \rho(s,x;t,y) \; dy.
\end{equation}

\noindent Here, $\mathbb{P}(a \leq X_t \leq b | X_s = x)$ is the conditional probability that $ a \leq X_t \leq b$, given initial condition $X_s = x$. It is well-known that  for an It\^{o} process $dX_t = u(t,X_t)dt + \sigma(t,X_t)dW_t$ the density satisfies the Fokker-Planck equation (see Ref.~\cite{capasso-bakstein}, Section 4.4). This is a partial differential equation for $\rho$ as a function of $t$ and $y$, keeping $s$ and $x$ fixed:
\begin{equation}\label{eq:Fokker-Planck}
	\frac{\partial \rho}{\partial t}(s,x;t,y) =  -\frac{\partial}{\partial y}[u(t,y)\rho(s,x;t,y)] + \frac{1}{2}\frac{\partial^2}{\partial y^2}[\sigma^2(t,y)\rho(s,x;t,y)].
\end{equation}

For our process $U_3$,  $u(t,y) = u(y) = -(1+y)/T + G(1-y)$ and $\sigma(t,y) = \sigma(y)= \alpha(1-y^2)$. Note that $U_3$ is autonomous. We also require that $\displaystyle \frac{\partial \rho}{\partial t} \equiv 0$, since we seek a stationary solution. Hence, $\rho$ will be a function of $y$ only, and the Fokker-Planck equation reduces to this ODE:
\begin{equation}\label{eq:Fokker-Planck-stationary}
	0 =  -\frac{d}{d y}[u(y)\rho(y)] + \frac{1}{2}\frac{d^2}{d y^2}[\sigma^2(y)\rho(y)].
\end{equation}
This can be solved by hand, but we simply refer to Ref.~\cite{risken1996fokker}, p. 98, (5.13), for the solution:
\begin{equation}\label{invariant-density}
	\rho(y) = \frac{N(\alpha,G,T)}{\sigma(y)^2}\exp\left\{\int_0^y\frac{2 u(q)}{\sigma(q)^2}\;dq\right\},
\end{equation}
\noindent where $N(\alpha, G,T)$ is a normalization constant so that $\displaystyle \int_{-1}^1 \rho(y) = 1$. (Even though $\sigma(1) = \sigma(-1) = 0$, $\rho$ is in fact normalizable. This is because the limits $\displaystyle\lim_{y\to 1^-}\rho(y)$ and $\displaystyle \lim_{y\to -1^+}\rho(y)$ exist and can be shown to equal $0$, so that $\rho$ is bounded.)

%Written out explicitly, 

%\begin{equation}\label{invariant-density-explicit}
%   \rho(y) = \dfrac{ (1-y^2)^{-2}\exp\left\{\int_0^y\frac{2(-(1+q)/T + G(1-q))}{\alpha^2 (1-y^2)^2}\;dq\right\} }{ \int_{-1}^1 (1-y^2)^{-2}\exp\left\{\int_0^y\frac{2(-(1+q)/T + G(1-q))}{\alpha^2 (1-q^2)^2}\;dq\right\} \;dy }
%\end{equation}

\vspace{0.3in}

\bibliography{bibo, theBib}

\end{document}